\newtheorem{theorem}{Theorem}
\newenvironment{proof}{\mbox{\newline\smallskip}\bf{Proof:}\rm{ }}{$\hfill\Box$\newline}
\begin{document}

\title{Google Scholar makes it Hard --\\
the complexity of organizing one's publications}
\author{Hans L. Bodlaender \and Marc van Kreveld}
\date{\normalsize Department of Information and Computing Sciences\\
\normalsize Utrecht University}
\maketitle

\begin{abstract}
With Google Scholar, scientists can maintain their publications on personal
profile pages, while the citations to these works are automatically collected and counted. 
Maintenance of publications is done manually by the researcher herself, and
involves deleting erroneous ones, merging ones that are the same but which were not
recognized as the same, adding forgotten co-authors, and correcting titles of papers
and venues.
The publications are presented on pages with 20 or 100 papers in the web page interface
from 2012--2014.\footnote{Since mid 2014, Google Scholar's profile pages
allow any number of papers on a single page.}
The interface does not allow a scientist to merge two version of a paper 
if they appear on different pages. 
This not only implies that a scientist who wants to merge certain subsets of
publications will sometimes be unable to do so, but also, we show in this note
that the decision problem
to determine if it is possible to merge given subsets of papers is NP-complete.
\end{abstract}

\section{Introduction}

Most researchers in computer science will be familiar with Google Scholar and its abilities
to maintain publications and their citations. Each researcher has his/her own profile
which is shown as a web page with a list of publications.
Google Scholar determines the number of citations to each publication and by default,
lists them in this order on web pages for that researcher. 
Since the collection of the data is automated, it will contain various mistakes, 
many of which are caused by other scientists who fail to give the title or other 
essential information on a paper correctly. 
As a consequence, a single paper may have various versions in the list,
with a slightly different title or publication venue, or with co-authors missing (see Figure~\ref{f:profile}). 
\begin{figure}[htb]
\begin{center}
\includegraphics{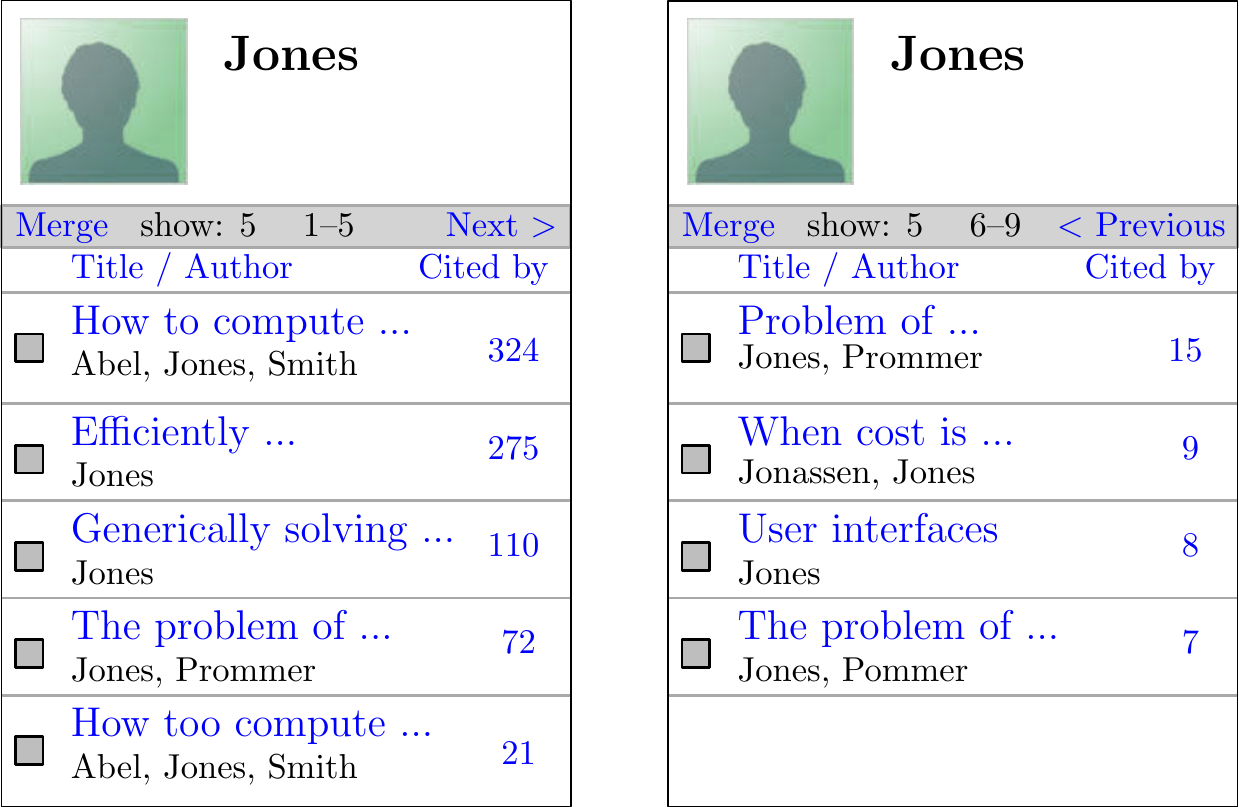}
\end{center}
\caption{Two pages for the author Jones with six different papers occurring as nine versions of papers.
Jones must perform three merges to correct the data, and the order is important.}
\label{f:profile}
\end{figure}
Google Scholar offers researchers the possibility to correct 
these mistakes on their own profile page,
for example by allowing them to merge two paper versions into one. 
This creates one version with the citations of the original versions summed up.
Of course, one could also delete the erroneous
version, but this may cost some citations, which on its turn can influence the 
ever-important H-index and other summary statistics that Google Scholar maintains.

Google Scholar by default shows the publication list on pages with 20 papers. 
It is possible to change this number to 100. Since mid 2014, a change in
the interface makes it possible to get all publications on a single page.
In this note we assume the interface in use from 2012 until mid 2014, when
this was not possible and the maximum was 100 on a single page.
To merge two papers, both should be selected on the web page, after which
the merge action can be executed. However, selection of two papers is possible
only if they appear on the same page, and therefore, merging can be done only if the
two papers appear in the same group of 100 papers. For example, if one
paper is the 103rd by citation count and another paper is the 187th, 
then they can be merged, but if one is the 97th and the other is the 105th, 
then they cannot be merged.

The order in which papers are merged is important. If there are two pairs of papers to be
merged, for example at positions 4 and 12, and at positions 101 and 107, 
then merging the first pair first will move the positions of the latter pair 
to 100 and 106, putting them on different pages. But merging the second pair first
still allows the merging of the first pair. Notice that
the position of a paper can change both forward and backward due to a merge.


Besides the problem that desired merges sometimes cannot be done, the computational
problem of deciding whether a number of merges can be done (and therefore, finding the
correct order) is computationally intractable. This implies that a polynomial-time
algorithm to produce the sequence of merges is unlikely to exist.

\section{A proof of intractability}

To prove intractability, or, NP-completeness of the problem, we will formalize it first.
Let $n$ be the total number of versions of papers initially in a problem instance,
and let $p$ be the page size. Let the paper versions be $v_1,\ldots,v_n$, and assume that
paper version $v_i$ is cited $c(v_i)$ times. 
A problem instance consists of the sequence $c(v_1), \ldots, c(v_n)$, and a partition
of $1,\ldots,n$ into subsets where two or more versions in
the same subset indicates that they are different versions of the same paper, 
and therefore, they are to be merged into one. The Google Scholar Merge Problem is
the problem of deciding whether for every subset, all of its versions can be merged.
When two versions are merged, they appear as one new version and their citations are 
added.
After each merge, the new set of versions appears in sorted order on citation 
count. When citation counts are the same, the papers will appear in some other
well-defined order, but this will be irrelevant for the intractability proof
and we will ignore this issue.

\begin{theorem}
The Google Scholar Merge Problem is NP-complete.
\end{theorem}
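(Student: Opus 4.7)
Membership in NP is routine: a witness is a sequence of at most $n - 1$ pair merges, which has polynomial size. Verification simulates the sequence, maintaining the citation-sorted list of remaining papers, and checks at each step that the two papers just merged lie within the same block of $p$ consecutive positions.

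For NP-hardness, I would reduce from a number-partitioning problem, most cleanly from \textsc{3-Partition} so as to keep citation counts polynomially bounded in the input size and give strong NP-hardness. Given $3m$ positive integers $a_1,\dots,a_{3m}$ with $\sum a_i = mB$ and each $a_i \in (B/4,B/2)$, decide whether they split into $m$ triples summing to $B$. The constructed Google Scholar instance would contain (a) $m$ ``critical pairs'' $\{u_j,w_j\}$, each straddling a distinct page boundary and each requiring the merge of its two members; (b) for each input number $a_i$ an ``item group'' of papers to be merged together, whose citation values are designed so that completing the group's internal merges contributes exactly $a_i$ downward shifts to every not-yet-merged $u_j$; and (c) padding papers used purely to pin the remaining positions and to break ties. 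Citation counts would be chosen so that each item group is always internally mergeable (its members sit on a common page throughout), and so that each critical pair $\{u_j,w_j\}$ becomes mergeable precisely when the cumulative shift above $u_j$ equals $B$. A legal schedule then exists iff the item groups can be distributed into $m$ batches whose $a_i$-values sum to $B$ each — i.e., iff the \textsc{3-Partition} instance is a yes-instance.

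The core obstacle is the arithmetic of this gadget: calibrating the citation counts so that (i) item groups are always internally mergeable regardless of order, (ii) only a batching corresponding to a valid $B$-partition brings critical pairs onto common pages (ruling out ``near miss'' sums and also ``partial-group'' cheats in which executing only some of a group's internal merges accidentally produces the required shift), and (iii) after each critical merge, everything not yet done remains feasible. Standard techniques — scaling citations by a large constant so that high-order bits lock positions while low-order bits carry the arithmetic weights $a_i$, stratifying the initial list so items, critical pairs, and padding live in disjoint citation ranges, and making the gadget's effect on unrelated papers zero by a matching-citations trick — should handle these requirements, but the bookkeeping is the technically delicate part: since a single pair-merge both removes two papers and inserts one at a possibly different rank, its effect on any other paper's position can be $-1$, $0$, or $+1$, and one must check case by case that no unintended shift crosses a page boundary in either direction.
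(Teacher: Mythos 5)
Your NP-membership argument and your choice of \textsc{3-Partition} (for strong NP-hardness and polynomially bounded citation counts) match the paper. But the hardness gadget you sketch has a gap that I do not think can be repaired within the framework you propose. You encode the numbers $a_i$ as \emph{rank shifts}: completing item group $i$'s internal merges is supposed to shift each critical paper by exactly $a_i$ positions, and a critical pair is supposed to become co-paginated exactly when the accumulated shift reaches $B$ (then $2B$, and so on). The trouble is that rank shifts accrue one unit per \emph{merge}, not per \emph{completed group}: each individual pair-merge inside an item group changes a lower-ranked paper's position by an element of $\{-1,0,+1\}$, and the problem statement does not force a group to be merged atomically --- it only requires that all of its members are eventually merged. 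An adversarial schedule can therefore perform any $B$ individual merges drawn from arbitrary groups, align and merge the first critical pair, perform $B$ more, and so on, completing all groups only at the very end. Every required merge gets done, yet no triple of $a_i$'s need sum to $B$; the ``only if'' direction of your reduction fails. This is exactly the ``partial-group cheat'' you flag as delicate bookkeeping, but it is not bookkeeping --- it is a structural obstruction: a paper's rank can only record \emph{how many} merges have occurred above it, never \emph{which groups have been completed}, so no calibration of citation counts can make the shift a function of completed groups.

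The paper's construction avoids this by encoding the $a_i$ in \emph{citation counts} rather than in positions. All $a_i$ and $B$ are doubled so that every mergeable version has an even count; ``blocks'' of at least $p-1$ unmergeable papers with identical odd counts act as walls that no paper can see across; and a ``trapped'' version with $D+2i$ citations sitting between two walls can be reached only by assembling a merged version with \emph{exactly} $D+2i$ citations, which the arithmetic forces to consist of one designated version with $D-B+2i$ citations plus $a_j$-versions summing to exactly $B$ (hence, by the bounds $B/4<a_j<B/2$, exactly three of them). A partially merged group simply has the wrong count and remains stuck behind a wall, so partial merges cannot cheat. To salvage your line of attack you would need an analogue of this equality test --- some mechanism that distinguishes a completed group from a partial one --- and citation-count matching against an immovable target appears to be the natural (perhaps the only) such mechanism in this model.
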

\begin{proof}
First, we will verify that the Google Scholar Merge Problem is in NP. 
This is easy: a suggested merge order can easily be checked in quadratic time
or less. 

Second, we use another NP-complete or NP-hard problem and provide a
reduction to our problem, namely {\sc 3-partition}~\cite{Garey1979}.
A {\sc 3-partition} instance
consists of a set $a_1,\ldots,a_{3m}$ of positive integers and an integer $B$,
and asks whether we can partition $a_1,\ldots,a_{3m}$ in $m$ subsets of $3$ integers 
that each sum up to $B$. The integers $a_1,\ldots,a_{3m}$ are all strictly between $B/4$ and $B/2$,
which ensures that any subset that sums to $B$ consists of exactly three integers.

We describe the reduction from {\sc 3-partition} to the Google Scholar Merge Problem. 
First, we double all $a_i$ and $B$ to ensure that they are even.
With slight abuse of notation we continue to use the notation 
$a_1,\ldots,a_{3m}$ and $B$ for these doubled values. 

We set the page size to $3m$.
Let $D$ be some large, even integer; we can use $D=3mB$.

Our instance consists of one paper $P$ with many versions and many papers with one version,
see Figure~\ref{f:initial} for a schematic depiction.

\begin{figure}[htb]
\begin{center}
\includegraphics{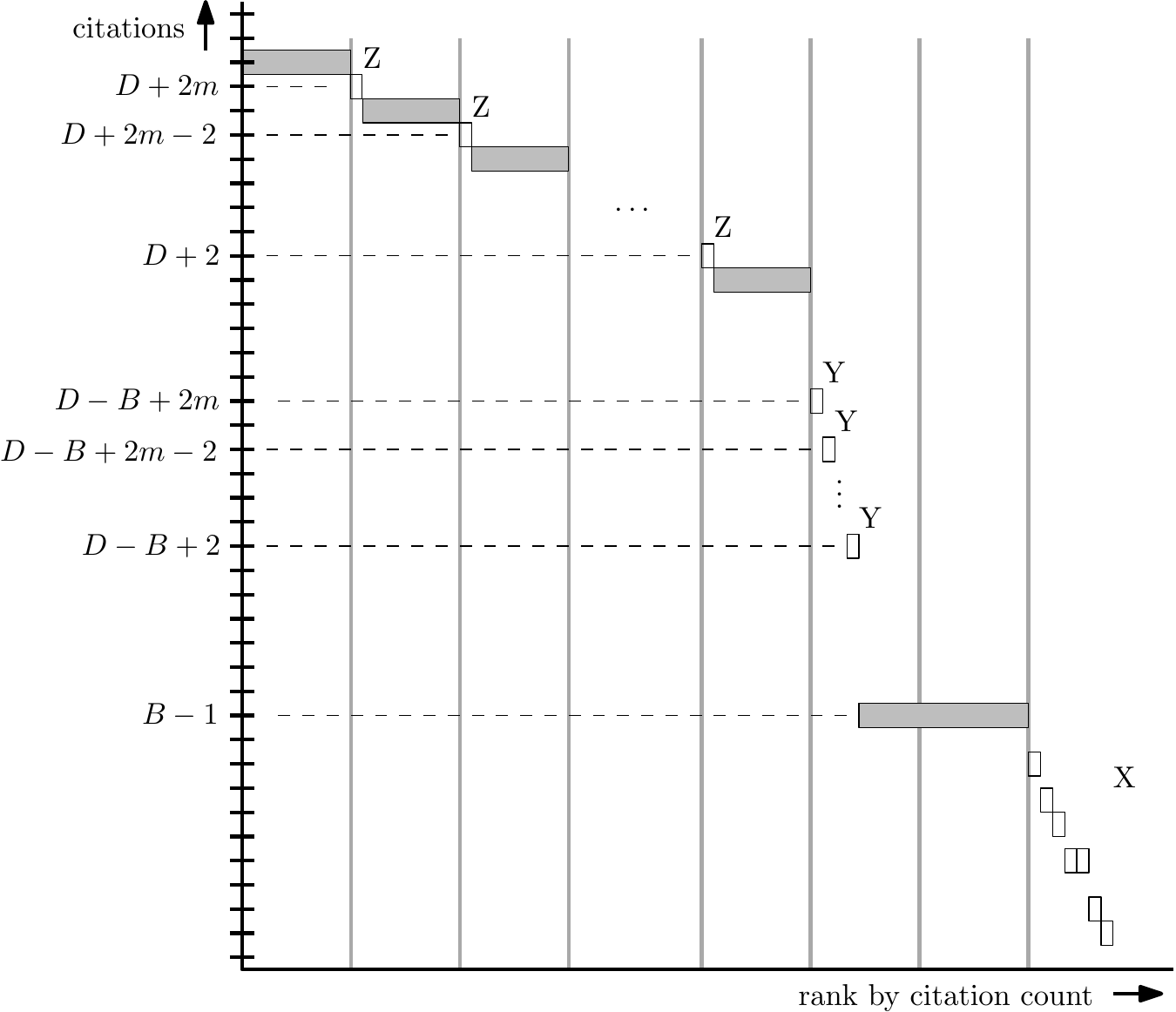}
\end{center}
\caption{Initial situation of the Google Scholar Merge problem arising 
from a {\sc 3-partition} instance. Vertical grey lines are page boundaries.}
\label{f:initial}
\end{figure}

First, we take one paper $P$ with $5m$ versions that are cited:
\begin{itemize}
\item $a_1$, \ldots, $a_{3m}$ times (type X),
\item $D - B + 2i$ times (type Y), for each $i$, $1\leq i \leq m$, 
\item $D +2i$ times (type Z), for each $i$, $1 \leq i \leq m$.
\end{itemize}
All versions of $P$ have an even number of citations, so any merge of these will
also have an even number of citations.

Second, we take many papers with only one version as follows:
\begin{itemize}
\item for each $i$, $1\leq i \leq m$, we take $3m-1$ papers with one version,
each with $D +2i -1$ citations,
\item we take $3m$ papers with $D+2m+1$ citations
\item we take $5m$ papers with $B-1$ citations
\end{itemize}

Call the latter the {\em single papers}; they all have an odd number of citations
and will not be merged.
A group of at least $3m-1$ single papers with the same 
number of citations is called a {\em block}; there are $m+2$ blocks in the instance.
Since a block has at least as many papers as the page size minus one, 
paper versions on different
sides of a block in citation count can never be on the same page. 
Notice that the type Z versions of paper $P$ are separated by the blocks.
The only way to ``rescue''
a type Z version and get it out of the adjacent blocks is to create a version
of paper $P$ that has exactly the same number of citations. The construction
makes sure that this can only be accomplished using exactly one type Y version and
three type X versions of paper P. 

We claim that this Google Scholar Merge Problem instance has a solution if and only 
if the {\sc 3-partition} instance has a solution.

Suppose we have a partition of $a_1,\ldots,a_{3m}$ into $m$ triples that 
each sum up to $B$.
Let $T_i$ be the subset of integers in the $i$th triple. 
For $i$ from $1$ to $m$ do the following. 
Merge the three type X versions with $a_j$ citations
for the $a_j \in T_i$ with each other. Note that all intermediate merged versions 
have less than $B-1$ citations together, and thus will appear on the last page.  
When we merged the triple we have a version with $B$ citations, which will
appear on the same page as all type Y versions.
Next, we merge any merged triple with a type Y version with $D-B+2i$
citations, giving it $D+2i$ citations. Then we merge it with the type Z
version that has $D+2i$ citations. Any merge of a type Z, a type Y, and three
type X versions will appear on the first page because it will have more citations
than the block with most citations.
We finish by merging all versions of paper $P$ on the first page.
Hence, we can merge all versions of paper P if the {\sc 3-partition} 
instance has a solution.
\medskip

Conversely, suppose that we can merge all versions of paper P in the instance.
This implies that all type X versions can be merged with other type X versions
into versions with at least $B$ citations, otherwise they would stay after
the block with $B-1$ citations. We can make $\leq m$ of these merged versions,
because the type X versions have $mB$ citations in total. These merged versions
consist of at least three type X versions, because any two type X
versions merged have at most $B-2$ citations 
(since type X versions are cited $a_i$ times with $B/4<a_i<B/2$, for all $1\leq i \leq 3m$).

Furthermore, the assumption that we can merge all versions of paper P implies
that at some point in the merge sequence, we must have had merged versions 
with $D+2$, $D+4$, \ldots, $D+2m$ citations, otherwise the type Z 
versions cannot get to a common page. 

Two type Y versions merged will have at least $2D-2B+6>D+2m$ citations,
so we can use at most one type Y paper if we want to get a version with
as many citations as some type Z version.
To get the type X versions merged with any of the type Z versions, they must
at some point be in a merged version with at least $D+2$ citations, but all
of the type X versions together have only $mB<D+2$ citations. So a type Y
version is always needed in such a merged version, and we conclude that
exactly one type Y version is needed in any merged version to make a merged
version with the same number of citations as any of the type Z versions. 

This implies that we need
at least $m$ merged versions of type X versions, otherwise we do
not have enough versions to merge with the type Y versions to reach
the same citation counts as the type Z versions, so we must be able to make
exactly $m$ merged versions of type X papers. To realize this we cannot 
have a merged version with four type X versions,
because this merged version would have $>B$ citations, and then by
the pigeonhole principle at least one of the $m$ merged versions has only two
versions and hence $<B$ citations. 
It follows that the $m$ merged versions of type X papers 
are triples and they sum up to exactly $B$, showing that the {\sc 3-partition} 
instance has a solution.
This completes the proof of NP-completeness.
\end{proof}

It would have been possible to construct a similar but slightly easier proof 
based on a reduction from the NP-complete problem {\sc Partition}. 
The reason we chose {\sc 3-partition} is that {\sc 3-partition} is NP-complete 
\emph{in the strong sense}~\cite{Garey1979}, which implies that even if the total number
of citations is bounded by a polynomial in $n$ (the integers of the input), then
{\sc 3-partition} is still NP-complete. A proof based on {\sc Partition} would
require exponentially many citations to the papers of a researcher.

\section{Conclusions}

It is interesting to see an example of a user interface where the user has to
solve an NP-complete problem due to the choice of interface. By comparison,
operations research companies often provide a user interface where a user makes 
a schedule for a scheduling problem that is usually NP-hard, but here the user 
interface assists in the scheduling, and no user interface would make the problem easier.
Other examples are certain puzzle games where large-instance extensions of the 
puzzle type are NP-complete or even PSPACE-hard~\cite{Hearn2009}. 
Here the idea is clearly that the task is supposed to be hard.
In the Google Scholar Merge Problem, the NP-completeness is accidental, and
arises from the way the user interface is designed. 
Google Scholar makes it hard for researchers to maintain their Google Scholar
profile pages. Since the interface change in the middle of 2014, the problem
is no longer intractable because there is now a button ``show more'' that allows
a user to get more papers on the same page, eventually allowing all papers to be on
a single page.


It is open to determine whether the Google Scholar Merge Problem is still NP-complete if:
\begin{itemize}
\item
the page size is bounded by a constant,
\item
the number of versions of any paper is bounded by a constant, or
\item
the page start can be placed at any rank number, not just at multiples of the page size plus one.
\end{itemize}

\bibliographystyle{plain}
\bibliography{citeproblem}

\end{document}